\spnewtheorem{Remark}{Remark}{\bfseries}{\itshape}
\newcommand{\canon}[1]{\mathcal{C}(#1)}
\newcommand{\slope}[1]{\phi(#1)}
\newcommand{\innerCone}[1]{$C^{\mathrm{in}}_{#1}$}
\newcommand{\outerCone}[1]{$C^{\mathrm{out}}_{#1}$}
\newcommand{\annulus}[1]{$A_{#1}$}
\newcommand{\spread}[1]{Sp(#1)}
\newcommand{\canonize}[1]{\textit{Canonize$(#1)$}}
\newcommand{\needspace}[1]{\textit{Space$(#1)$}}
\begin{document}	
	\title{Pole Dancing:\thanks{We here refer to pole dancing as a fitness and
competitive sport. The authors hope that many of our readers try this
activity themselves, and will in return introduce many pole dancers to
Graph Drawing, thereby alleviating the gender imbalance in both
communities.
The authors do not condone any pole activity used for sexual
exploitation or abuse of women or men.} 3D Morphs for Tree Drawings\thanks{E. A. was partially supported by F.R.S.-FNRS and SNF grant P2TIP2-168563 under the SNF Early PostDoc Mobility program. 
P.C. was supported by CONACyT, projects MINECO MTM2015-63791-R and Gen. Cat. 2017SGR1640.
P.B, A.D and V.D. were supported by NSERC.
F.F. was partially supported by MIUR Project ``MODE" under PRIN 20157EFM5C and by H2020-MSCA-RISE project 734922, ``CONNECT".
S.L. is Directeur de Recherches du F.R.S.-FNRS.
A.T. was partially supported by the project ``Algoritmi e sistemi di analisi visuale di reti complesse e di grandi dimensioni" - Ric. di Base 2018, Dip. Ingegneria, Univ. Perugia.}}

	\author{Elena Arseneva\inst{1} \and
		Prosenjit Bose\inst{2} \and
		Pilar Cano \inst{2,3} \and
		Anthony D'Angelo\inst{2} \and \\
		Vida Dujmovi\'c\inst{4} \and
		Fabrizio Frati\inst{5} \and
		Stefan Langerman\inst{6} \and
		Alessandra Tappini\inst{7}}
	
	\authorrunning{E. Arseneva et al.}
	\institute{
St. Petersburg State University (SPbU), Russia \email{ea.arseneva@gmail.com}
 \and 
		Carleton University, Ottawa, Canada\\ 
		\email{jit@scs.carleton.ca, anthonydangelo@cmail.carleton.ca} \and
		Universitat Polit\`ecnica de Catalunya, Barcelona, Spain
		\email{m.pilar.cano@upc.edu} \and
		University of Ottawa, Canada \email{vida@cs.mcgill.ca} \and
		Roma Tre University, Italy \email{frati@dia.uniroma3.it} \and
		Universit\'e libre de Bruxelles (ULB)
		\email{stefan.langerman@ulb.ac.be} \and
		Universit\`a degli Studi di Perugia, Italy \email{alessandra.tappini@studenti.unipg.it}}

	\maketitle

	\begin{abstract}
		We study the question whether a crossing-free 3D morph between two straight-line drawings of an $n$-vertex tree can be constructed consisting of a small number of linear morphing steps. We look both at the case in which the two given drawings are two-dimensional and at the one in which they are three-dimensional. In the former setting we prove that a crossing-free 3D morph always exists with $O(\log n)$ steps, while for the latter $\Theta(n)$ steps are always sufficient and sometimes necessary.
	\end{abstract}
	
	\section{Introduction}
	A {\em morph} between two drawings of the same graph is a continuous transformation from one drawing to the other. Thus, any  time instant of the morph defines a different drawing of the graph. Ideally, the morph should preserve the properties of the initial and final drawings throughout. As the most notable example, a morph between two planar graph drawings should guarantee that every intermediate drawing is also planar; if this happens, then the morph is called {\em planar}.
	
	Planar morphs have been studied for decades and find nowadays applications in animation, modeling, and computer graphics; see, e.g.,~\cite{fg-hmti-99,gs-gifm-pm}. A planar morph between any two topologically-equivalent\footnote[2]{Two planar drawings of a connected graph are \emph{topologically equivalent} if they define the same clockwise order of the edges around each vertex and the same outer face.} planar straight-line\footnote[3]{A \emph{straight-line drawing} $\Gamma$ of a graph $G$ maps vertices to points in a Euclidean space and edges to open straight-line segments between the images of their end-vertices. We denote by $\Gamma(v)$ (by $\Gamma(G')$) the image of a vertex $v$ (of a subgraph $G'$ of $G$, resp.).} drawings of the same planar graph always exists; this was proved for maximal planar graphs by Cairns~\cite{c-dplc-44} back in 1944, and then for all planar graphs by Thomassen~\cite{t-dpg-83} almost forty years later. Note that a planar morph between two planar graph drawings that are not topologically equivalent does not exist.
	
	It has lately been well investigated whether a planar morph between any two topologically-equivalent planar straight-line drawings of the same planar graph always exists such that the vertex trajectories have low complexity. This is usually formalized as follows. Let $\Gamma$ and $\Gamma'$ be two topologically-equivalent planar straight-line drawings of the same planar graph $G$. Then a morph $\mathcal M$ is a sequence $\langle \Gamma_1,\Gamma_2,\dots,\Gamma_k\rangle$ of planar straight-line drawings of $G$ such that $\Gamma_1=\Gamma$, $\Gamma_k=\Gamma'$, and $\langle \Gamma_i,\Gamma_{i+1}\rangle$ is a planar linear morph, for each $i=1,\dots,k-1$. A {\em linear morph} $\langle \Gamma_i,\Gamma_{i+1}\rangle$ is such that each vertex moves along a straight-line segment at uniform speed; that is, assuming that the morph happens between time $t=0$ and time $t=1$, the position of a vertex $v$ at any time $t\in[0,1]$ is $(1-t)\Gamma_i(v) + t\Gamma_{i+1}(v)$. The complexity of a morph $\mathcal M$ is then measured by the number of its {\em steps}, i.e., by the number of linear morphs it consists of.
	
	A recent sequence of papers~\cite{SODA-morph,Angelini-optimal-14,angelini2013morphing,Barrera-unidirectional} culminated in a proof~\cite{alamdari2017morph} that a planar morph between any two topologically-equivalent planar straight-line drawings of the same $n$-vertex planar graph can always be constructed consisting of $\Theta(n)$ steps. This bound is asymptotically optimal in the worst case, even for paths.
	
	The question we study in this paper is whether morphs with sub-linear complexity can be constructed if a third dimension is allowed to be used. That is: Given two topologically-equivalent planar straight-line drawings $\Gamma$ and $\Gamma'$ of the same $n$-vertex planar graph $G$ does a morph $\mathcal M=\langle \Gamma=\Gamma_1,\Gamma_2,\dots,\Gamma_k=\Gamma'\rangle$ exist such that: (i) for $i=1,\dots,k$, the drawing $\Gamma_i$ is a crossing-free straight-line 3D drawing of $G$, i.e., a straight-line drawing of $G$ in $\mathbb{R}^3$ such that no two edges cross; (ii) for $i=1,\dots,k-1$, the step $\langle \Gamma_i,\Gamma_{i+1}\rangle$ is a crossing-free linear morph, i.e., no two edges cross throughout the transformation; and (iii) $k=o(n)$? A morph $\mathcal M$ satisfying properties (i) and (ii) is a {\em crossing-free 3D morph}. 
	
	Our main result is a positive answer to the above question for trees. Namely, we prove that, for any two planar straight-line drawings $\Gamma$ and $\Gamma'$ of an $n$-vertex tree $T$, there is a crossing-free 3D morph with $O(\log n)$ steps between $\Gamma$ and $\Gamma'$. More precisely the number of steps in the morph is linear in the \emph{pathwidth} of $T$.  Notably, our morphing algorithm works even if $\Gamma$ and $\Gamma'$ are not topologically equivalent, hence the use of a third dimension overcomes another important limitation of planar two-dimensional morphs. Our algorithm morphs both $\Gamma$ and $\Gamma'$ to an intermediate suitably-defined {\em canonical 3D drawing}; in order to do that, a root-to-leaf path $H$ of $T$ is moved to a vertical line and then the subtrees of $T$ rooted at the children of the vertices in $H$ are moved around that vertical line, thus resembling a pole dance, from which the title of the paper comes.

	We also look at whether our result can be generalized to morphs of crossing-free straight-line 3D drawings of trees. That is, the drawings $\Gamma$ and $\Gamma'$ now live in $\mathbb{R}^3$, and the question is again whether a crossing-free 3D morph between $\Gamma$ and $\Gamma'$ exists with $o(n)$ steps. We prove that this is not the case: Two crossing-free straight-line 3D drawings of a path might require $\Omega(n)$ steps to be morphed one into the other. The matching upper bound
 can always be achieved: For any two crossing-free straight-line 3D drawings $\Gamma$ and $\Gamma'$ of the same $n$-vertex tree $T$ there is a crossing-free 3D morph between $\Gamma$ and $\Gamma'$ with $O(n)$ steps.

	The rest of the paper is organized as follows. 
	In Sect.~\ref{sec:3dtrees} we deal with crossing-free 3D morphs of 3D tree drawings. In Sect.~\ref{sec:morphing-paths} we show how to construct $2$-step crossing-free 3D morphs between planar straight-line drawings of a path. In Sect.~\ref{sec:morphing-trees} we present our main result about crossing-free 3D morphs of planar tree drawings. Finally, in Sect.~\ref{sec:conclusions} we conclude and present some open problems.
	
	Because of space limitations, some proofs are omitted or just sketched; they can be found in the full version of the paper.	

	\section{Morphs of 3D drawings of trees} \label{sec:3dtrees}
	
	In this section we give a tight $\Theta(n)$ bound on the number of steps 
	in a crossing-free 3D morph between two crossing-free straight-line 3D tree drawings.

	\begin{theorem} \label{th:upper-bound-3d}
		For any two crossing-free straight-line 3D drawings $\Gamma$, $\Gamma'$ of an $n$-vertex tree $T$, there exists a crossing-free 3D morph from $\Gamma$ to $\Gamma'$ that consists of $O(n)$ steps. 
	\end{theorem}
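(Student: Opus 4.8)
The plan is to morph both $\Gamma$ and $\Gamma'$ to a common canonical 3D drawing, using $O(n)$ steps for each; concatenating the first sequence with the reverse of the second then yields the desired $O(n)$-step morph. A natural choice for the canonical drawing is one in which the tree $T$, rooted at an arbitrary vertex $r$, is placed so that every vertex $v$ at depth $d$ lies on the line $x=d$ in the plane $z=0$, with the $y$-coordinates chosen (say, via an in-order-type traversal) so that subtrees occupy disjoint $y$-intervals; this is clearly crossing-free and is the 3D analogue of a ``leftmost'' drawing.

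The core is therefore a procedure that morphs an arbitrary crossing-free straight-line 3D drawing $\Gamma$ of $T$ to this canonical drawing $\mathcal{C}(T)$ in $O(n)$ steps. I would process the vertices of $T$ one at a time in BFS order $r=v_1,v_2,\dots,v_n$, maintaining the invariant that after step $i$ the vertices $v_1,\dots,v_i$ are already in their canonical positions and, crucially, that the subtrees still to be placed are ``parked'' in generic position far away, each confined to a thin tube around a ray leaving its attachment vertex, so that the whole configuration is crossing-free and has enough room to maneuver. To place $v_{i+1}$: first, in one linear step, shrink the entire subtree $T_{v_{i+1}}$ rooted at $v_{i+1}$ (currently hanging off its already-placed parent $v_j$) down to a tiny ball near $v_j$ — this is a contraction toward a point and is crossing-free because the subtree is convex-position-free inside its own tube; second, in one linear step, translate that tiny ball along a path in free space to a tiny ball centered at the target canonical position of $v_{i+1}$; third, in one linear step, re-expand $T_{v_{i+1}}$ inside a thin tube around the canonical ray direction at $v_{i+1}$. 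Each of the $n$ vertices thus costs $O(1)$ steps, giving $O(n)$ total, and then $2\cdot O(n)=O(n)$ for both halves.

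The main obstacle is guaranteeing that the intermediate drawings are genuinely crossing-free — in particular that the ``translate the tiny ball through free space'' step can always be carried out by a single straight-line motion without the moving ball (a) colliding with already-placed parts of $T$, (b) colliding with other parked subtrees, or (c) having its single incident edge to $v_j$ sweep across something. Since a straight segment between the two ball centers need not avoid the obstacles, I expect to need either an intermediate ``lift'' to a very high $z$-plane (morphing in $O(1)$ extra steps: contract, lift straight up to a height above everything, translate horizontally, lower) — the standard trick that works because at sufficient height the only obstacle is the pencil of edges through $v_j$, which can be avoided by genericity — or a careful inductive choice of the parked configuration so that a monotone corridor always exists. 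Establishing that such a height/corridor exists at every stage, and that the edge from $v_j$ to the moving ball never creates a crossing during the lift (it sweeps out a thin triangle that can be kept disjoint from the rest by perturbation), is the delicate part; everything else is bookkeeping. A secondary point to handle is that $\Gamma$ and $\Gamma'$ need not be topologically equivalent, but since we route through a canonical drawing and 3D morphs are not constrained by rotation system or outer face, this causes no difficulty.
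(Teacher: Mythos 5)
Your high-level plan (route both drawings through a canonical form, $O(1)$ steps per vertex) is reasonable, but the step you yourself flag as ``delicate'' is not a technicality to be deferred --- it is the entire difficulty, and the argument you sketch for it is incorrect. When the contracted ball is translated horizontally (after the lift), the edge joining it to its stationary parent $v_j$ sweeps a two-dimensional triangle spanning all heights between $v_j$ and the lifting plane. A $2$-dimensional surface and the $1$-dimensional complex formed by the remaining edges of the drawing meet transversally in isolated points in $\mathbb{R}^3$ (the dimension count is $2+1=3$), and transversal intersections are \emph{stable} under small perturbations, so they cannot be ``kept disjoint by perturbation'': if some other edge passes between $v_j$ and the lifting plane in the relevant horizontal range, a crossing is forced. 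A second, independent gap is that your invariant is never initialized: in the given drawing $\Gamma$ the subtrees hanging off the root are in arbitrary position, not parked in thin tubes, so the very first contraction of a child subtree toward the root sweeps the join of an arbitrary crossing-free drawing with a point and can cross its siblings; morphing $\Gamma$ into the parked configuration is essentially the problem you set out to solve. (Relatedly, after you re-expand $T(v)$ into its tube you never say what the drawing inside the tube looks like, so the later extraction of one child subtree from that tube past its siblings is also unjustified; and ``convex-position-free'' is not an argument --- a contraction toward a point is a scaling, hence free of \emph{self}-crossings, but collisions with the rest of the drawing are exactly what must be controlled.)

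The paper's proof takes a genuinely different and much simpler route that sidesteps all routing issues: induction on $n$. Remove a leaf $v$ with parent $u$, recursively morph the resulting $(n-1)$-vertex drawings into one another, and reinsert $v$ so that it stays \emph{arbitrarily close to $u$ throughout the entire recursive morph}. Since the recursive morph consists of finitely many linear steps of a crossing-free drawing, it has a positive clearance, and an edge $uv$ kept shorter than that clearance can never participate in a crossing. One initial step brings $v$ next to $u$ and two final steps send it to its target, so each vertex costs $O(1)$ extra steps and the total is $O(n)$. The key difference from your scheme is that only a single vertex (not a whole subtree) is ever in transit, and it travels glued to its parent rather than across free space, which is what makes the crossing-freeness argument go through.
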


	\begin{proof}[sketch]
		The proof is by induction on $n$. The base case, in which $n=1$, is trivial. If $n>1$, then we remove a leaf $v$ and its incident edge $uv$ from $T$, $\Gamma$, and $\Gamma'$. This results in an $(n-1)$-vertex tree $T'$ and two drawings $\Delta$ and $\Delta'$ of it. By induction, there is a crossing-free 3D morph between $\Delta$ and $\Delta'$. We introduce $v$ in such a morph so that it is arbitrarily close to $u$ throughout the transformation; this significantly helps to avoid crossings in the morph. The number of steps is the one of the recursively constructed morph plus one initial step to bring $v$ close to $u$, plus two final steps to bring $v$ to its final position.\qed
	\end{proof}

	\begin{figure}[tb]
		\begin{minipage}{0.49\textwidth}
			\centering
			\includegraphics[scale = 0.6]{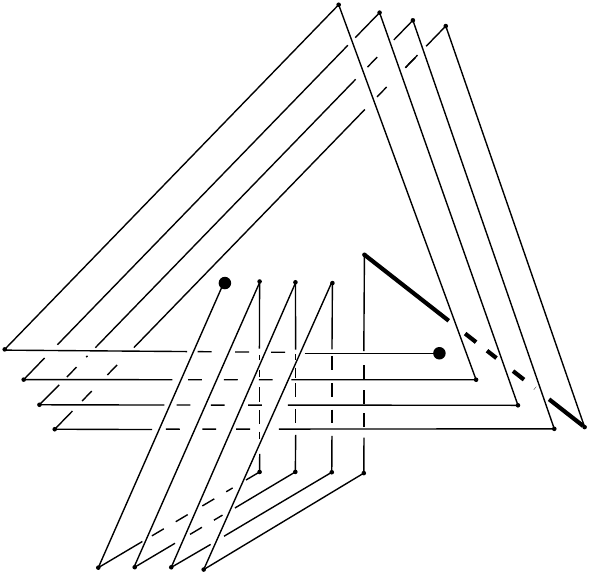}
			
			(a)
		\end{minipage}
		\begin{minipage}{0.49\textwidth}
			\centering
			\includegraphics[scale = 0.6,page=2]{TwoLinks-1.pdf}
			
			(b)
		\end{minipage}
		
		\caption{Illustration for the proof of Theorem~\ref{thm:lower-bound-3d}: (a) The drawing $\Gamma$ of $P$, with $n=26$; (b) the link obtained from $\Gamma$; the invisible edges are dashed. }
		\label{fig:lower-bound-3d}
	\end{figure}

	\begin{theorem}
		\label{thm:lower-bound-3d}
		There exist two crossing-free straight-line 3D drawings $\Gamma, \Gamma'$ of an $n$-vertex path $P$ such that any crossing-free 3D morph
		from $\Gamma$ to  $\Gamma'$ consists of $\Omega(n)$ steps.
	\end{theorem}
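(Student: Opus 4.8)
The plan is to choose $\Gamma$ so that the path, once closed up by a single segment, carries a highly nontrivial knot, to choose $\Gamma'$ so that the same construction yields the unknot, and to show that a single crossing-free linear morph step can change a suitable knot invariant of that closed curve by only $O(n)$. Since the invariant will differ by $\Theta(n^2)$ between $\Gamma$ and $\Gamma'$, this forces $\Omega(n)$ steps.

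Concretely, for $\Gamma'$ I would take any straight-line $3$D drawing of $P=v_1\cdots v_n$ whose closure --- the polyline $\Gamma'(P)$ together with the straight segment between $\Gamma'(v_1)$ and $\Gamma'(v_n)$ --- is the unknot; a planar convex drawing works. For $\Gamma$ I would invoke the existence of polygonal knots with few sticks and large classical invariants: realize a torus knot $T(p,q)$ with $p,q=\Theta(n)$ as an $n$-edge polygon in $\mathbb R^3$ (possible since such torus knots have stick number $\Theta(n)$), designate one of its edges $\ell$ as a ``virtual'' closing segment, and let $\Gamma$ be the straight-line drawing of $P$ given by the remaining polygonal arc on $n$ vertices. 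The point of this choice is that the knot signature satisfies $|\sigma(T(p,q))|=\Theta(n^2)$, while $\sigma$ of the unknot is $0$. (Any invariant that is Lipschitz under crossing changes and super-linear for some $n$-stick knot works; the unknotting number is an alternative.)

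Now suppose $\langle\Gamma=\Gamma_1,\dots,\Gamma_k=\Gamma'\rangle$ is a crossing-free $3$D morph. For each $i$ let $\widehat{\Gamma_i}$ be the closed curve obtained from $\Gamma_i(P)$ by adding the segment $\ell_i$ joining the current positions of $v_1$ and $v_n$, and during a step $\langle\Gamma_i,\Gamma_{i+1}\rangle$ let $\ell(t)$ denote this segment. Because every $\Gamma_i$ is a crossing-free drawing of $P$, the path never meets itself, so $\widehat{\Gamma_i}$ is an embedded circle except when $\ell_i$ meets $\Gamma_i(P)$; after a generic perturbation, $\widehat{\Gamma_1}$ is the torus knot above and $\widehat{\Gamma_k}$ is the unknot. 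I would track $\sigma(\widehat{\Gamma_i})$ along the morph. During a step all vertices move along lines at constant speed, so each edge of $P$ and the segment $\ell$ sweep bilinear surface patches; for a fixed edge $e$ of $P$, the four relevant endpoints are coplanar only at $O(1)$ times $t$ (the $3\times 3$ determinant expressing coplanarity is a bounded-degree polynomial in $t$), and a necessary condition for $\ell(t)$ to meet $e(t)$ is coplanarity, so $\ell(t)$ meets $e(t)$ at most $O(1)$ times. Summing over the $n-1$ edges of $P$, the segment $\ell$ passes through a strand of $\Gamma(P)$ at most $O(n)$ times per step; since the path stays crossing-free throughout, these strand passages are the only topology changes of $\widehat{\Gamma}$, and each one is a crossing change on the knot $\widehat{\Gamma}$, shifting the signature by at most $2$. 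Hence $|\sigma(\widehat{\Gamma_{i+1}})-\sigma(\widehat{\Gamma_i})|=O(n)$, and telescoping over the $k-1$ steps yields $\Theta(n^2)\le (k-1)\cdot O(n)$, i.e.\ $k=\Omega(n)$.

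The main obstacle is exactly this per-step estimate, in two places. First, the degenerate instants --- when $\ell(t)$ becomes collinear with a path edge, passes through a vertex of $P$, or when two passages coincide --- must be dealt with by a generic-position perturbation of the given morph (keeping it crossing-free with the same number of steps) or by a direct bound on how much $\sigma$ can jump at such an instant. Second, one must make rigorous that pushing the virtual segment $\ell$ through a single strand of $\Gamma(P)$ realizes exactly one crossing change of the ambient knot, so that the Lipschitz bound $|\Delta\sigma|\le 2$ (or the analogous bound for the unknotting number) applies. Everything else --- the drawing $\Gamma$, the value $\Theta(n^2)$ of the invariant, and the telescoping --- is routine once the relevant facts about torus knots (stick number $\Theta(n)$, signature $\Theta(n^2)$) are cited.
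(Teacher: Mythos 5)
Your proposal is correct in outline and reaches the bound by the same high-level strategy as the paper --- exhibit a drawing whose closure carries a topological invariant of size $\Theta(n^2)$, show the invariant drops to $0$ in $\Gamma'$, and prove each linear step changes it by only $O(n)$ --- but the construction and the invariant are genuinely different. The paper cuts the middle edge of $P$, closes the two halves with two invisible edges, and tracks the \emph{linking number} of the resulting two-component link; $\Gamma$ is a pair of interleaved spirals whose projection exhibits $\Theta(n^2)$ same-sign crossings between the components, so the linking number is $\Omega(n^2)$, and each invisible edge is crossed $O(1)$ times by each of the $O(n)$ real edges per linear step. You instead close $P$ with a single virtual segment into an $n$-stick torus knot $T(p,q)$ with $p,q=\Theta(n)$ and track the \emph{signature} (or unknotting number). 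Your per-step estimate is the same coplanarity-determinant argument the paper uses implicitly, and you are in fact more explicit than the paper about the degenerate instants (identically coplanar pairs, vertex passages, collinear overlaps) that must be perturbed away. The trade-off: the linking number is the more elementary invariant (signed crossing count in any projection, changes by exactly $\pm 1$ per strand passage, and the $\Omega(n^2)$ bound is read directly off the figure), whereas your route leans on citable but heavier facts --- the stick number $s(T(p,q))=2q$ for $p\le q\le 2p$, the asymptotics $|\sigma(T(p,q))|=\Theta(pq)$, and the bound $|\Delta\sigma|\le 2$ per crossing change (if you use the unknotting number instead, the lower bound $(p-1)(q-1)/2$ is the Milnor conjecture, which is very heavy machinery). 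In exchange you keep a one-component object with a single virtual edge rather than a two-component link with two invisible edges, and you avoid the paper's step of deleting a real edge of $P$ and arguing the lower bound survives. Both arguments are sound modulo the same genericity caveat.
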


	Before proving Theorem~\ref{thm:lower-bound-3d}, we review some definitions and facts from knot theory; refer, e.g., to the book by Adams~\cite{adams2004knot}. A \emph{knot} is an embedding of a circle $S^{1}$ in $\mathbb{R}^3$. A {\em link} is a collection of knots which do not intersect, but which may be linked together. For links of two knots, the (absolute value of the) {\em linking number} is an invariant that classifies links with respect to ambient isotopies. Intuitively, the linking number is the number of times that each knot winds around the other. The linking number is known to be invariant with respect to different projections of the same link~\cite{adams2004knot}. Given a projection of the link, the linking number can be determined by orienting the two knots of the link, and for every crossing between the two knots in the projection adding $+1$ or $-1$ if rotating  the understrand respectively  clockwise or counterclockwise  lines it up with the overstrand (taking into account the direction).

	\begin{proof}[Theorem~\ref{thm:lower-bound-3d}]
		The drawing $\Gamma$ of $P$ is defined as follows. Embed the first $\lfloor n/2 \rfloor$ edges of $P$ in 3D as a spiral of monotonically decreasing height. Embed the rest of $P$ as a same type of spiral affinely transformed so that it goes around one of the sides of the former spiral. See Figure~\ref{fig:lower-bound-3d}a. The drawing $\Gamma'$ places the vertices of  $P$ in order along the unit parabola in the plane $y = 0$.

		Cut the edge joining the two spirals (the bold edge in Figure~\ref{fig:lower-bound-3d}a). Removing an edge makes morphing easier so any lower bound would still apply. Now close the two open curves using two \emph{invisible} edges to obtain a \emph{link} 
		of two knots; see Figure~\ref{fig:lower-bound-3d}b.
			It is easy to verify that the (absolute value of the) linking number of this link is $\Omega(n^2)$: indeed, determining it by the above procedure for the projection given by Figure~\ref{fig:lower-bound-3d} results in the linking number being equal to the number of crossings between the two links in this projection. In the drawing $\Gamma'$,  each of the two halves of $P$ (and their invisible edges) are separated by a plane and so their linking number is $0$. 
	
	In a valid linear morph, the edges of $P$ cannot cross each other, but they can cross invisible edges.
		However, during a linear morph between two straight-line 3D drawings of a graph $G$ any two non-adjacent edges of $G$ intersect $O(1)$ times.	
	Thus each invisible edge can only be crossed $O(n)$ times during a linear morph. A single crossing can only change the linking number by 1. Therefore the linking number can only decrease by $O(n)$ in a single linear morph. \qed
	\end{proof}

	\section{Morphing two planar drawings of a path in 3D}\label{sec:morphing-paths}
	
	In this section we show how to morph two planar straight-line drawings $\Gamma$ and $\Gamma'$ of an $n$-vertex path $P:=(v_0,\dots v_{n-1})$ into each other in two steps.

	The \emph{canonical 3D drawing} of $P$, denoted by $\canon{P}$, is the crossing-free straight-line 3D drawing of $P$ that maps each vertex $v_i$ to the point $(0,0,i) \in \mathbb{R}^3$, 
	as shown in Figure~\ref{fig:canonical-path}. We now prove the following.

	\begin{figure}[tb]
		\centering
		\includegraphics[scale=.6]{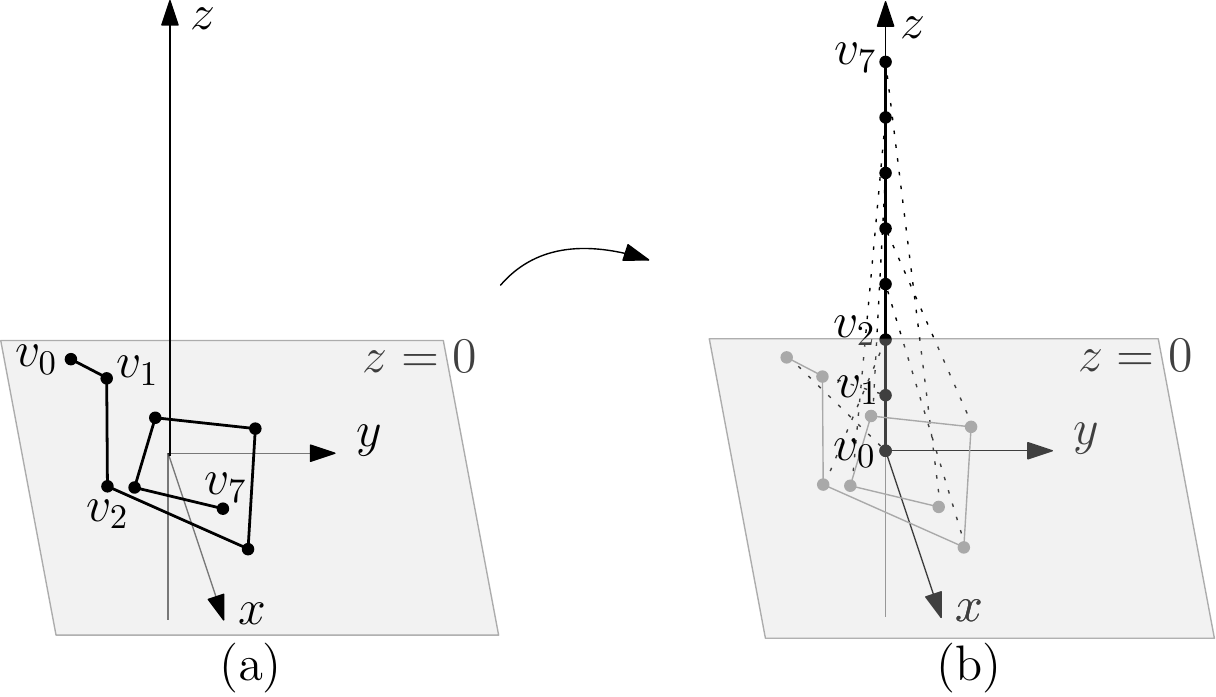}
		\caption{(a) A straight-line planar drawing $\Gamma$ of an $n$-vertex path $P$ and (b) a morph from $\Gamma$ to $\canon{P}$. The vertex trajectories are represented by dotted lines.} 
		\label{fig:canonical-path}
	\end{figure}

	\begin{theorem}
		\label{thm:path}
		For any two planar straight-line drawings $\Gamma$ and $\Gamma'$ of an $n$-vertex path $P$, there exists a crossing-free 3D morph $\mathcal{M}=\langle \Gamma,\canon{P},\Gamma' \rangle$ with 2 steps.
	\end{theorem}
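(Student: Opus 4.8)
The plan is to prove the theorem by showing that a single crossing-free linear morph transforms $\Gamma$ into $\canon{P}$; applying the same fact to $\Gamma'$ and reversing the resulting morph yields the second step, so that $\mathcal{M}=\langle\Gamma,\canon{P},\Gamma'\rangle$ is a $2$-step crossing-free 3D morph. Since $\Gamma$ is a planar drawing, I would regard it as lying in the plane $z=0$ of $\mathbb{R}^3$, so that $\Gamma(v_i)=(x_i,y_i,0)$ for suitable reals $x_i,y_i$. Then in the linear morph $\langle\Gamma,\canon{P}\rangle$ the vertex $v_i$ is at position $(1-t)(x_i,y_i,0)+t(0,0,i)$ at time $t\in[0,1]$, whose $z$-coordinate is exactly $ti$.

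First I would observe that for every fixed $t\in(0,1]$ the $z$-coordinates of the vertices are strictly increasing along $P$, since $z(v_i)=ti$ and $0=t\cdot 0<t\cdot 1<\dots<t(n-1)$. Hence at time $t$ each edge $v_iv_{i+1}$, being an open segment along which the $z$-coordinate varies linearly, has all of its points with $z$-coordinate in the open interval $(ti,t(i+1))$. For any two distinct edges $v_iv_{i+1}$ and $v_jv_{j+1}$ with $i<j$ we have $t(i+1)\le tj$, so the open intervals $(ti,t(i+1))$ and $(tj,t(j+1))$ are disjoint, and therefore the two edges cannot intersect. Consequently every intermediate drawing for $t\in(0,1]$ is a crossing-free straight-line 3D drawing of $P$; and at $t=0$ the drawing is $\Gamma$, which is crossing-free by hypothesis. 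This establishes that $\langle\Gamma,\canon{P}\rangle$ is a crossing-free linear morph, and by the identical argument so is $\langle\Gamma',\canon{P}\rangle$; reversing the latter gives the crossing-free step $\langle\canon{P},\Gamma'\rangle$, and concatenating the two steps proves the theorem. Note that $\canon{P}$ itself is trivially crossing-free, so all three drawings in $\mathcal{M}$ are valid.

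I expect the only delicate point to be the degenerate time $t=0$, where all vertices have $z$-coordinate $0$ and the monotonicity argument does not apply — but there the drawing is literally $\Gamma$, which is crossing-free because it is a planar straight-line drawing, so no extra work is required. Everything else reduces to the elementary observation that strictly monotone $z$-coordinates along the path separate the edges into pairwise $z$-disjoint open segments, which is what makes the single step work regardless of whether $\Gamma$ and $\Gamma'$ are topologically equivalent.
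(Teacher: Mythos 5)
Your proof is correct and follows essentially the same route as the paper's: both argue that the linear morph $\langle\Gamma,\canon{P}\rangle$ keeps the $z$-coordinates strictly increasing along the path for every $t\in(0,1]$, so distinct edges are separated by horizontal planes, and both handle the second step by playing $\langle\Gamma',\canon{P}\rangle$ backwards. Your explicit treatment of the degenerate instant $t=0$ (where planarity of $\Gamma$ itself suffices) is a minor point the paper leaves implicit, but the argument is the same.
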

	
	\begin{proof}
		
		It suffices to prove that the linear morph $\langle \Gamma,\canon{P}\rangle$ is crossing-free, since the morph $\langle \canon{P},\Gamma'\rangle$ is just the morph $\langle\Gamma',\canon{P}\rangle$ played backwards.

		Since $\langle \Gamma,\canon{P}\rangle$ is linear, the speed at which the vertices of $P$ move is  uniform (though it might be different for different vertices). Thus the speed at which their projections on the $z$-axis move is uniform as well.	
		Since $v_i$ moves uniformly from $(x_i,y_i,0)$ to $(0,0,i)$, at any time during the motion (except at the time $t=0$) we have 
		$z(v_0) < z(v_1) < \ldots < z(v_{n-1})$. 
		Therefore, in any intermediate drawing any edge $(v_i,v_{i+1})$ 
		is separated from any other edge by the horizontal plane through one of its end-points. 
		Hence no crossing happens during $\langle \Gamma,\canon{P}\rangle$.
 \qed
	\end{proof}

	\section{Morphing 
		two planar drawings of a tree in 3D}
	\label{sec:morphing-trees}
	
	Let $T$ be a tree with $n$ vertices, arbitrarily rooted at any vertex. 
In this section we show that any two planar straight-line drawings of $T$ can be morphed into one another by a crossing-free 3D morph with $O(\log n)$ steps (Theorem~\ref{thm:morph-trees}). 
	Similarly to Section~\ref{sec:morphing-paths}, we first define a canonical 3D drawing $\canon{T}$ of $T$  (see Section~\ref{sec:canonical-tree}), and then show how to construct a crossing-free 3D morph from any planar straight-line drawing of $T$ to 
	$\canon{T}$. 
	We describe the morphing procedure in Section~\ref{sec:morph-tree}; then in Section~\ref{subsec:space} we present a procedure \needspace{ } that carries out the computations required by the morphing procedure; finally, in Section~\ref{subsec:morph-analysis} we analyze the correctness and efficiency of both procedures.     
	
	Before proceeding, we introduce some necessary definitions and notation.  	
	By a {\em cone} we mean a straight circular cone induced by a ray rotated around a fixed vertical line (the {\em axis}) while keeping its origin fixed at a point (the {\em apex}) on this line. The {\em slope}  $\slope{C}$ of a cone $C$, is the slope of the generating ray as determined in the vertical plane  containing the ray.  
	By a {\em cylinder}
	we always mean a straight cylinder having a horizontal circle as a base. Such cones or cylinders are uniquely determined, up to translations, respectively by their apex and slope or by their height and radius.

	For a tree $T$, let $T(v)$ denote the subtree of $T$ rooted at its vertex $v$. Also let $|T|$ denote 
	the number of vertices in $T$. 	
	The {\em heavy-path decomposition}~\cite{sleator1981data} of a 
	tree $T$ is defined as follows. For each non-leaf vertex $v$ of $T$, let $w$ be the child of $v$ in $T$ such that $|T(w)|$ is maximum (in case of a tie, we choose the child arbitrarily). Then $(v,w)$ is a {\em heavy edge}; further, each child $z$ of $v$ different from $w$ is a \emph{light child} of $v$, and the edge $(v,w)$ is a {\em light edge}. Connected components of heavy edges form paths, called \emph{heavy paths}, which may have many incident light edges.
	Each path has a vertex, called the \emph{head}, that is the closest vertex to the root of $T$. 
	See Figure~\ref{fig:heavy-path-cactus} for an example.
	A \emph{path tree} of $T$ is a tree whose vertices correspond to 
	heavy paths in $T$. 
	The parent of a heavy path $P$ in the path tree is the heavy path that contains the parent of the head of $P$. 
	The root of the path tree is the heavy path containing the root of $T$. It is well-known~\cite{sleator1981data} that the height of the path tree is $O(\log{n})$.
	We denote by $H(T)$ the root of the path tree of $T$; let $v_0, \ldots, v_{k-1}$ be the ordered sequence of the vertices of $H(T)$, where $v_0$ is the root of $T$. For $i=0,\dots,k-1$, we let $v_i^0, \ldots, v_i^{t_i}$ be the light children of $v_i$ in any order.	
	Let $L(T) = u_0, u_1, \ldots, u_{l-1}$ be the sequence of the light children of $H(T)$ ordered so that: (i) any light child of a vertex $v_j$ precedes any light child of a vertex $v_i$, if $i < j$; and (ii) the light child $v_i^{j+1}$ of a vertex $v_i$ precedes the light child $v_i^j$ of $v_i$. When there is no ambiguity we refer to $H(T)$ and $L(T)$ simply as $H$ and $L$, respectively.

	\begin{figure}[tb]
		
			\centering
			\includegraphics[scale = 0.7]{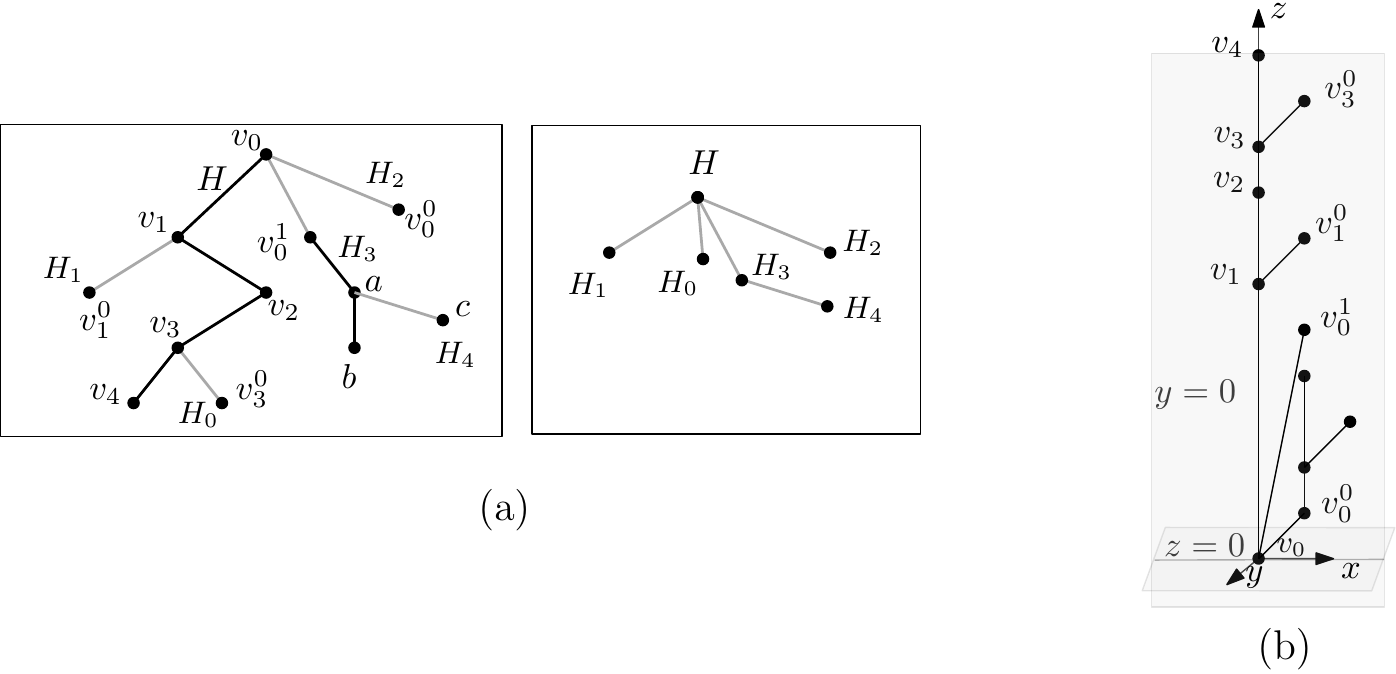}

		\caption{(a) A tree $T$; (left) its heavy  edges (bold lines) forming the heavy paths $H=H(T),H_0, \ldots, H_4$, and (right) the path tree of $T$; (b) $\canon{T}$ for the tree $T$ in (a). }
		\label{fig:heavy-path-cactus}
	\end{figure}

	\subsection{Canonical 3D drawing of a tree}
	\label{sec:canonical-tree}

We define the {\em canonical 3D drawing} $\canon{T}$ of a tree $T$ as a straight-line 3D drawing of $T$ that maps each vertex $v$ of $T$ to its {\em canonical position} $\canon{v}$ defined 
as follows (see  Figure~\ref{fig:heavy-path-cactus}b). Note that our canonical drawing is equivalent to the \emph{``standard''} straight-line upward drawing of a tree~\cite{therese,chan2018tree,crescenzi1992note}.

First, we set $\canon{v_0} = (0,0,0)$ for the root $v_0$ of $T$. 
Second, for each $i=1,\dots,k-1$, we set  $\canon{v_i} = (0,0, z_{i-1} + |T(v_{i-1})|-|T(v_i)|)$, where $z_{i-1}$ is the $z$-coordinate of $\canon{v_{i-1}}$. 
Third, for each $i=1,\dots,k-1$ and for each light child $v_i^j$ of $v_i$, we determine $\canon{v_i^j}$ as follows. If $j = 0$, we set $\canon{v_i^j} = (1,0,1+z_i)$, where $z_i$ is the $z$-coordinate of $\canon{v_i}$; otherwise, we set $\canon{v_i^j} = (1,0,z^{j-1}_i+|T(v_i^{j-1})|)$, where $z^{j-1}_i$ is the $z$-coordinate of $\canon{v^{j-1}_i}$. 
Finally, in order to determine the canonical positions of the vertices in $T(v_i^j) \setminus \{v_i^j\}$, we recursively construct the canonical 3D drawing $\canon{T(v_i^j)}$ of ${T(v_i^j)}$, and translate all the vertices by the same vector so that $v_i^j$ is sent to $\canon{v_i^j}$. 

\begin{Remark}
Notice that the canonical position $\canon{v}$ of any vertex $v$ of $T$ is $({\emph{dpt}}(v), 0, {\emph{dfs}}(v))$. Here ${\emph{dpt}}(v)$ is the depth, in the path tree of $T$, of the node that corresponds to the 
heavy path of $T$ that contains $v$; and 
${\emph{dfs}}(v)$ is the position of $v$ in a depth-first search on $T$ 
in which the children of any vertex are visited as follows: first visit the light children in reverse order with respect to $L$, and then visit the child incident to the heavy edge. 
\end{Remark}

The following lemma is a direct consequence of the construction of $\canon{T}$.

	\begin{lemma}
		\label{lemma:canon-tree}
		The canonical 3D drawing $\canon{T}$ of $T$ lies on a rectangular grid
in the plane $y=0$, where the grid has height $n$ and width equal to the  
height $h=O(\log{n})$ of the path tree of $T$. Moreover, $\canon{T}$ is on or above the line $z = x$. 
		\end{lemma}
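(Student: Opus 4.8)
The plan is to first pin down the explicit coordinates of the canonical drawing, as recorded in the Remark preceding the lemma: namely that every vertex $v$ of $T$ satisfies $\canon{v}=(\mathrm{dpt}(v),0,\mathrm{dfs}(v))$, where $\mathrm{dpt}(v)$ is the depth in the path tree of the heavy path containing $v$, and $\mathrm{dfs}(v)$ is the index of $v$ in the depth-first traversal described there. All three claims of the lemma then follow at once: the $y=0$ statement and the integrality of the coordinates are built into the formula, the grid dimensions come from the ranges of $\mathrm{dpt}$ and $\mathrm{dfs}$, and the line $z=x$ corresponds exactly to the inequality $\mathrm{dfs}(v)\ge\mathrm{dpt}(v)$.

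The main work is to prove the coordinate formula, which I would do by induction on $|T|$, following the recursive definition of $\canon{T}$. The base case $|T|=1$ is trivial. In the inductive step, the vertices $v_0,\dots,v_{k-1}$ of the root heavy path $H$ all receive $x$-coordinate $0$ — their common path-tree depth — and are placed on the $z$-axis with $\canon{v_0}=(0,0,0)$ and $z(v_i)=z(v_{i-1})+|T(v_{i-1})|-|T(v_i)|$. Since $v_i$ is the heavy child of $v_{i-1}$, this increment equals $1+\sum_j|T(v_{i-1}^j)|$, i.e.\ the vertex $v_{i-1}$ together with all of its light subtrees, which is precisely the jump in $\mathrm{dfs}$ between $v_{i-1}$ and $v_i$, because the traversal of the Remark visits $v_{i-1}$, then the subtrees of its light children, and only afterwards $v_i$. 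Likewise the light children $v_i^0,\dots,v_i^{t_i}$ get $x$-coordinate $1$ and occupy consecutive blocks of $z$-values immediately after $v_i$, matching $\mathrm{dfs}$. Finally, each $T(v_i^j)$ is drawn by recursively constructing $\canon{T(v_i^j)}$ and translating it by $\canon{v_i^j}=(1,0,z(v_i^j))$; here one uses that restricting the heavy-path decomposition of $T$ to $T(v_i^j)$ gives exactly the heavy-path decomposition of $T(v_i^j)$ (subtree sizes are unchanged), so the path tree of $T(v_i^j)$ is the subtree of the path tree of $T$ hanging below the heavy path of $v_i^j$ and all its depths are smaller by exactly $1$ — which the $+1$ in the $x$-translation restores — while the $z$-translation turns the local depth-first indices into the global ones. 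By induction this yields $\canon{w}=(\mathrm{dpt}(w),0,\mathrm{dfs}(w))$ for all $w$.

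With the formula in hand the rest is short. The $x$-coordinate of each vertex is a path-tree depth, so it ranges over the $h=O(\log n)$ depths of the path tree, giving width $h$; the $z$-coordinate is the depth-first index, and a depth-first numbering is a bijection onto $\{0,1,\dots,n-1\}$, so the $z$-coordinates are $n$ consecutive integers, giving height $n$; together with $y\equiv 0$ this places $\canon{T}$ on the asserted $n\times h$ grid in the plane $y=0$. For the claim that $\canon{T}$ lies on or above $z=x$ it suffices to check $\mathrm{dfs}(v)\ge\mathrm{dpt}(v)$: the root-to-$v$ path in $T$ runs through the heavy paths $P_0=H,P_1,\dots,P_d$ that form the root-to-node path of length $d=\mathrm{dpt}(v)$ in the path tree, and it contains exactly one light edge for each step from $P_{i-1}$ to $P_i$, hence at least $\mathrm{dpt}(v)$ edges in total; since depth-first indices strictly increase along this path from $\mathrm{dfs}(v_0)=0$, we get $\mathrm{dfs}(v)\ge\mathrm{dpt}(v)$. (Equivalently, one can verify directly that $z\ge x$ propagates through the recursion: it holds on $H$ because there $x=0$ and $z\ge 0$, it holds for the light children $v_i^j$ because there $x=1$ and $z\ge 1$, and translating a sub-drawing with $z\ge x$ by $(1,0,z(v_i^j))$ with $z(v_i^j)\ge 1$ preserves it.)

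The only genuinely delicate point is the bookkeeping inside the inductive step: confirming that the heavy-path decomposition behaves well under passing to a subtree, and that the nested translations compose correctly so that the $x$-coordinates really are the path-tree depths and the $z$-coordinates really are the global depth-first indices. Once that is settled, the grid bounds and the inequality $z\ge x$ drop out immediately.
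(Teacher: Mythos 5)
Your proof is correct and matches the paper's treatment: the paper states this lemma as ``a direct consequence of the construction'' (via the preceding Remark that $\canon{v}=(\mathrm{dpt}(v),0,\mathrm{dfs}(v))$), and your argument is exactly a careful verification of that coordinate formula by induction, from which the grid bounds and the inequality $z\ge x$ follow as you describe. No gaps; you have simply written out the details the paper leaves implicit.
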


\begin{Remark}
\label{rem:therese}
In the above definition of the canonical 3D drawing $\canon{T}$, instead of the heavy-path decomposition of $T$,
 we can use the decomposition based on the \emph{Strahler number} of $T$, see~\cite{therese} where the Strahler number is used under the name \emph{rooted pathwidth} of $T$.
 With this change, the width of $\canon{T}$ will be equal to the Strahler number of $T$, 
which is the instance-optimal width of an upward drawing of a tree~\cite{therese}. Moreover, since
the Strahler number is linear in the \emph{pathwidth} of $T$, so is the width of 
$\canon{T}$ defined this way. 
 This is clearly not worse, and, for some instances, much better than the width given by the heavy-path decomposition. 
\end{Remark}	
In the below description of the morph we use heavy paths, however we can use the paths given by Remark~\ref{rem:therese} instead, without any modification. 

	\subsection{The procedure {\em Canonize($\Gamma$)}}
	\label{sec:morph-tree}
	
Let $\Gamma = \Gamma({T})$ be a planar straight-line drawing of a tree $T$. 
Below we give a recursive procedure \canonize{\Gamma} that constructs a crossing-free 3D morph from 
$\Gamma$ to the canonical 3D drawing $\canon{T}$. We assume that 
$\Gamma$ is enclosed in a disk of diameter $1$ centered at $(0,0,0)$ in the plane $z=0$, and that the root $v_0$ of ${T}$ is placed at $(0,0,0)$ in $\Gamma$. 
This is not a loss of generality, up to a suitable modification of the reference system.

	\paragraph{Step 1 (set the pole).}
	
	The first step of the procedure \canonize{\Gamma} 
	aims to construct a 
	  linear morph $\langle \Gamma,\Gamma_1\rangle$, where $\Gamma_1$ is such that the heavy path $H=(v_0,\dots,v_{k-1})$ of $T$ lies on the vertical line through $\Gamma(v_0)$ and the subtrees of $T$ rooted at the
	light children of each vertex $v_i$ lie on the horizontal plane through~$v_i$.
	More precisely, the vertices of $T$ are placed in $\Gamma_1$ as follows. For $i=0,\dots,{k-1}$, place $v_i$ at the point $\canon{v_i}$. 
	Every vertex that belongs to a subtree rooted at a light child of $v_i$ is placed at a point such that  its trajectory in the morph 
defines the same vector as the trajectory of $v_i$.\footnote{Since the morph  $\langle \Gamma,\Gamma_1\rangle$ is linear, the trajectory of any vertex $v$ is simply the line segment connecting the positions of $v$ in $\Gamma$ and in $\Gamma_1$. To define a  vector, we orient the segment towards the position of $v$ in $\Gamma_1$.} 
	Below we refer to $\Gamma_1(H)$ as the \emph{pole}. The pole will remain still throughout the rest of the morph.

	\paragraph{Step 2 (lift).} The aim of the second step of the procedure \canonize{\Gamma} is to construct a linear morph $\langle \Gamma_1,\Gamma_2\rangle$, where $\Gamma_2$ is such that the drawings of any two subtrees $T(u_i)$ and $T(u_j)$ rooted at different light children $u_i$ and $u_j$ of vertices in $H$ are vertically and horizontally separated. The separation between $\Gamma_2(T(u_i))$ and $\Gamma_2(T(u_j))$ is set to be large enough so that the recursively computed morphs \canonize{\Gamma_2(T(u_i))} and \canonize{\Gamma_2(T(u_j))} do not interfere with each other. 
	
	We describe how to construct $\Gamma_2$. As anticipated, $\Gamma_2(v_i)=\Gamma_1(v_i)$, for each vertex $v_i$ in $H$. 
	In order to determine the placement of the vertices not in $H$ we use $l$ cones \innerCone{u_0},  \ldots,\innerCone{u_{l-1}} and $l$ cones \outerCone{u_0},  \ldots,\outerCone{u_{l-1}}, namely one cone \innerCone{u_t} and one cone \outerCone{u_t} per vertex $u_t$ in $L$.
	We also	use, for each $u_t$, a cylinder \needspace{\Gamma_2(T(u_t))} that bounds the volume used by
	\canonize{\Gamma_2(T(u_t))}.
	We defer the computation of these cones and cylinders to Section~\ref{subsec:space}, and for now assume that they
	are already available.
	For each $t=0,\dots,l-1$ and for each $j=0,\dots,t-1$, assume that $\Gamma_2(T(u_j))$ has been computed already -- this is indeed the case when $t=0$. Let $\mathcal{P}_t$ be the horizontal plane $z=|T|-1 + \sum_{j = 0}^{t-1}{h(u_j)}$, where $h(u_j)$ is the height of the cylinder \needspace{\Gamma_2(T(u_j))}. 
	The drawing $\Gamma_2$ maps the subtree $T(u_t)$ to the plane $\mathcal{P}_t$, just outside the cone \innerCone{u_t} and just inside the cone \outerCone{u_t}. 
	See Figure~\ref{fig:cones}.
	We proceed with the formal 
	definition of $\Gamma_2$.  
	Let $v$ be any vertex of $T(u_t)$ and let $(v_x, v_y, v_z)$ be the coordinates of $\Gamma_1(v)$. 
	Then $\Gamma_2(v)$ is the point  $(v_x\frac{r_t}{r}, v_y\frac{r_t}{r}, z_t)$. 
	Here $z_t$ is the height of the plane $\mathcal{P}_t$, 
	$r_t$ is the radius of the section of \innerCone{u_t} by the plane $\mathcal{P}_t$, and  $r$ is the distance from $\Gamma_1(v_i)$
	to its closest point of the drawing $\Gamma_1(T(u_t))$, where $v_i$ is the parent of $u_t$.
	See Figure~\ref{fig:cones}.
	Note that the latter closest point can be a point on an edge.

	\begin{figure}
		\centering
		\includegraphics[scale=0.9]{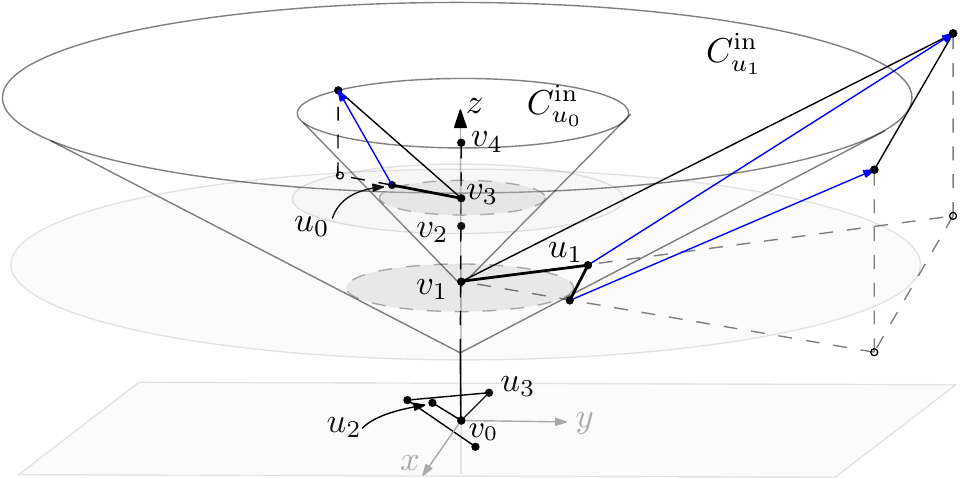}
		\caption{The vertices $v_0, v_1, v_2, v_3, v_4$ are in the heavy path $H$ of $T$. The lower gray disk has its center at $v_1$ and has radius equal to the distance from $\Gamma_1(v_i)$ to its closest point in $\Gamma_1(T(u_1))$. Blue arrows show the mapping of vertices in subtrees $T(u_0)$ and $T(u_1)$.} 
		\label{fig:cones}
	\end{figure}

	\paragraph{Step 3 (recurse).}
	
	For each $u_{t} \in L$, we make a recursive call  \canonize{\Gamma_2(T(u_t))}. The resulting morphs are combined into a unique morph $\langle \Gamma_2,\dots,\Gamma_3\rangle$, whose number of steps is equal to the maximum number of steps in any of the recursively computed morphs. Indeed, the first step of $\langle \Gamma_2,\dots,\Gamma_3\rangle$ consists of the first steps of all the recursively computed morphs that have at least one step; the second step of $\langle \Gamma_2,\dots,\Gamma_3\rangle$ consists of the second steps of all the $t$ recursively computed morphs that have at least two steps; and so on.

	\paragraph{Step 4 (rotate, rotate, rotate).} The next morph transforms $\Gamma_3$ into a drawing $\Gamma_4$ such that each vertex $u_t\in L$ is mapped to the intersection of the 
	cone \innerCone{u_t}, the planes $y=0$, $\mathcal{P}_t$, and the half-space $x>0$. 
	Note that going from $\Gamma_3$ to $\Gamma_4$ in one linear crossing-free 3D morph is not always possible. 
	Refer to Lemma~\ref{lemma:rotations} for the implementation of the morph from $\Gamma_3$ to $\Gamma_4$ in $O(1)$ steps. After Step 4 the whole drawing lies on the plane $y=0$.

	\paragraph{Step 5 (go down).}
	This step consists of a single linear morph $\langle \Gamma_4,\Gamma_5\rangle$, where $\Gamma_5$ is defined as follows. For every vertex $v_i$ in $H$, $\Gamma_5(v_i)=\Gamma_4(v_i)$; further, for every vertex $u_t \in L$, all the vertices of $T(u_t)$ have the same $x$- and $y$-coordinates in $\Gamma_5$ as in $\Gamma_4$, however their $z$-coordinate is decreased by the same amount so that $u_t$ lies on the horizontal plane through
 $\canon{u_t}$.

	\paragraph{Step 6 (go left).}
	The final part of our morphing procedure consists of a single linear morph $\langle \Gamma_5,\Gamma_6\rangle$, where $\Gamma_6$ is the canonical 3D drawing $\canon{T}$ of $T$. 
Note that this linear morph only moves the vertices 
horizontally.

	\subsection{The procedure {\em Space($\Gamma$)}}
	\label{subsec:space}
	
	In this section we give a procedure to compute the cylinders and the cones which are necessary for Steps 2 and 4 of the procedure \canonize{\Gamma}.

 	We fix a constant $c \in \mathbb{R}$ with $c>1$, which we consider global to the procedure \canonize{\Gamma} and its recursive calls; below we refer to $c$ as the \emph{global constant}. The global constant $c$ will help us to define the cones so that Step~4 of \canonize{\Gamma} can be realized with $O(1)$ linear morphs, see Lemma~\ref{lemma:rotations}.

The procedure \needspace{\Gamma} returns 	a 
	cylinder that encloses  all the intermediate drawings 
	of the morph determined by
	\canonize{\Gamma}.
	At the same time, 
	\needspace{\Gamma}

	determines the cones \innerCone{u_t} and \outerCone{u_t}
	for every vertex $u_t\in L$. 	
	
	We now describe \needspace{\Gamma}. 
	Let $\Gamma_1$ be the result of the application of Step 1 of \canonize{\Gamma}. Figure~\ref{fig:space-procedure} illustrates our description. 
	
	If $T$ is a path, i.e., $T = H$, return the cylinder of height $|T|-1$ and radius~$1$. In particular, if $T$ is a single vertex, return  the disk of radius 1. Otherwise,
		construct the cylinder 
		and the cones in the following fashion:
		\begin{itemize}
			\item Set the current cone $C$ to be an infinite cone of slope 1. The apex of $C$ is determined as follows: starting with the apex being at the highest point of the pole, slide $C$ vertically  downwards until it touches the drawing $\Gamma_1(T(u_0))$. That is, the apex of $C$ is at the lowest possible position on the  pole such that the whole drawing $\Gamma_1(T(u_0))$ is outside of $C$. See Figure~\ref{fig:space-procedure}a. 
			\item Set the current height $h$ to be $|T|-1$. 
			\item Iterate through the 
			light children of $H$ in the order as they appear in $L$. 
			For every $u_t$ in $L$: 
			\begin{itemize}
				\item Set \innerCone{u_t} to be the current cone $C$.
				
				\item Add the height of \needspace{\Gamma_2(T(u_t))} to the current height $h$.
				
				\item Let $C'$ be the cone with the same apex as $C$ and with a slope
				defined so that the drawing $\Gamma_1(T(u_t))$ is in-between $C$ and $C'$,
				and $C$ is \emph{well-separated} from $C'$ with the global constant $c$. That is, 
				$\slope{C'} = 
				\min{(\slope{C}/\spread{u_t, \Gamma_1}, \slope{C}/c)}$, 
				where 
				$\spread{u_t, \Gamma_1}$ is the spread of the drawing  $\Gamma_1(T(u_t))$ with respect to the parent $v_i$ of $u_t$ in $H$. Namely 
				$\spread{u_t, \Gamma_1}$ is 
				the ratio between the outer and the inner radius of
				the minimum annulus centered at $v_i$ and enclosing the drawing $\Gamma_1(T(u_t))$. 
				See Figure~\ref{fig:space-procedure}a.

				\item Let $\mathcal{S}_t$ be the cylinder  
				\needspace{\Gamma_2(T(u_t))} translated
				so that 
				the center of its lower base is at 
				the point 
				$\Gamma_2(u_t)$.

				\item Decrease $\slope{C'}$ so that $C'$ encloses the entire cylinder $\mathcal{S}_t$.
				
				\item Set \outerCone{u_t} to be the 
				cone
				$C'$.
				
				\item  If $u_t$ is not the last element of $L$ (i.e., $t < l-1$), then let $u_t = v_i^j$ 
				and define an auxiliary cone $\tilde{C}$ as follows. The apex of $\tilde{C}$ is at 
					$\Gamma_1({v_x})$ where $v_x$ is the parent of $u_{t+1}$; note that $v_x=v_i$ iff $j>0$.
					The slope of $\tilde{C}$ is   the maximum   slope that satisfies the following requirement:   
					(i) the slope of $\tilde{C}$ is at most   the slope of $C'$.  
					In addition, only for 
					the case when $v_x = v_i$, we require: 
					(ii) in the closed half space 
					$z \leq h$, the portion of
					$\tilde{C}$ encloses the portion of $C'$. See Figure~\ref{fig:space-procedure}b.
						Update the cone $C$ to be the lowest vertical translate of $\tilde{C}$ so that $\Gamma_1(T(u_{t+1}))$ is still outside the cone.
			
			\end{itemize}

			\item 
			Return the cylinder 
			of height $h$ (the current height), and radius equal to the radius of the section of the current cone $C$ 
			cut by the plane $z=h$.

		\end{itemize}

\subsection{Correctness of the morphing procedure  
}
\label{subsec:morph-analysis}

In this section, we analyze the correctness and the efficiency of the procedure \canonize{\Gamma} (see Theorem~\ref{thm:morph-trees}) and we give the details of Step~4 (see Lemma~\ref{lemma:rotations}).

		\begin{figure}[htb]
			\begin{minipage}{0.49\linewidth}
				\centering
				\includegraphics[scale=0.9]{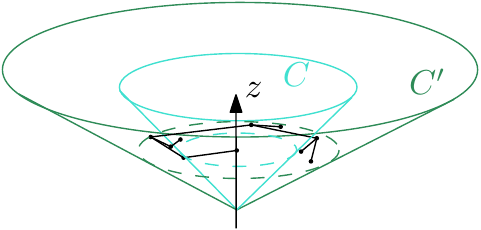}
				\\
				(a)
			\end{minipage}
			\hfill
			\begin{minipage}{0.49\linewidth}
				\centering
				\includegraphics[scale=0.9]{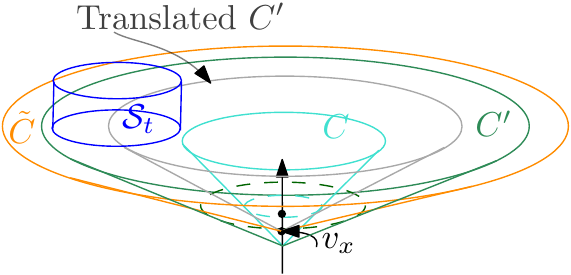}
				\\
				(b)
			\end{minipage}
			
			\caption{Illustration for \needspace{\Gamma}: (a) construction of $C$ and $C'$; (b) construction of $\tilde{C}$.}
			\label{fig:space-procedure}
		\end{figure}

		\begin{figure}
		\centering
		\includegraphics[scale=0.9]{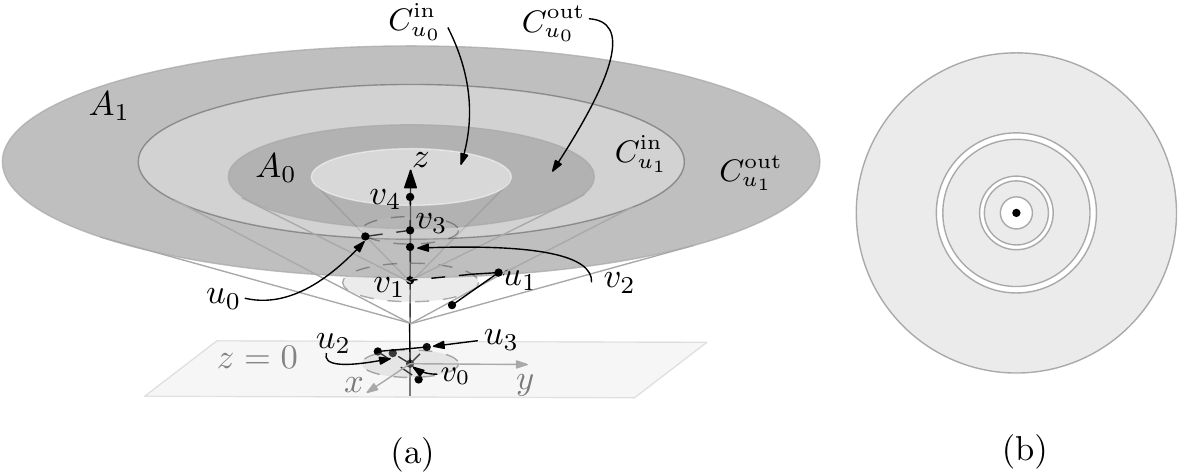}
		\caption{(a) Annuli for the subtrees rooted at $u_0$ and $u_1$; (b) top view of the annuli.}
		\label{fig:annuli}
	\end{figure}

\begin{lemma}
	\label{lemma:rotations}
	
	Step 4 of the procedure \canonize{\Gamma} can be realized as a
	crossing-free 3D  morph whose number of steps is bounded from above by a constant that depends on the global constant $c$.
\end{lemma}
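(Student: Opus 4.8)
\emph{Plan.} I would realize Step~4 as a constant‑length sequence of linear morphs: one \emph{re‑centering} morph, then $O(1)$ \emph{rotation} morphs that bring every light subtree into the target half‑plane, and finally $O(1)$ \emph{placement} morphs that reach $\Gamma_4$. The argument rests on two structural properties that the procedure \needspace{ } is designed to provide. First, in $\Gamma_3$ the subtree $T(u_t)$ is drawn as a rigidly placed, suitably rescaled copy of $\canon{T(u_t)}$ that is contained in the region $A_t$ enclosed between the cones \innerCone{u_t} and \outerCone{u_t} and that lies, together with the edge from $u_t$ to its parent on the pole, in a single vertical half‑plane $\Pi_{\theta_t}$ bounded by the pole. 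Second, the well‑separation property with the global constant $c$ makes the regions $A_t$ pairwise disjoint and guarantees that, at every height $z$ spanned by $T(u_t)$, the radius $r_{\mathrm{out}}(z)$ of \outerCone{u_t} is at least $c$ times the radius $r_{\mathrm{in}}(z)$ of \innerCone{u_t}. Disjointness of the $A_t$ lets me morph all subtrees simultaneously and reason about one subtree at a time.

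\emph{Rotation primitive.} The key tool is that a planar drawing $\Delta$ contained in a half‑plane bounded by the pole can be rotated by any angle $\alpha<\pi$ about the $z$‑axis in a single linear morph, with well‑controlled intermediate drawings. Indeed, sending each vertex $v$ from $\Delta(v)$ to $R_\alpha(\Delta(v))$, where $R_\alpha$ denotes rotation by $\alpha$ about the $z$‑axis, puts $v$ at time $s$ at $((1-s)I+sR_\alpha)\Delta(v)$; the linear map $(1-s)I+sR_\alpha$ fixes every horizontal plane and acts on it as a single rotation about the axis composed with a uniform scaling by $\lambda(s)=\sqrt{1-2s(1-s)(1-\cos\alpha)}$. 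Hence, as long as $\alpha<\pi$, every intermediate drawing is a copy of $\Delta$, rotated about the axis and scaled by $\lambda(s)$ in the direction orthogonal to the axis, and so is again planar and lies in a half‑plane bounded by the pole; moreover $\lambda(s)\in[\cos(\alpha/2),1]$, the minimum being attained at $s=\tfrac12$. In particular the morph never lets $\Delta$ meet the pole and never increases the distance of $\Delta$ from the axis, shrinking it by at most the factor $\cos(\alpha/2)$.

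\emph{Putting it together.} Set $\alpha(c):=2\arccos(1/\sqrt c)\in(0,\pi)$. I would first apply a single linear morph moving each $T(u_t)$ radially outward within $\Pi_{\theta_t}$ by an amount chosen so that afterwards the whole drawing of $T(u_t)$ lies in $A_t$ at radial distance at least $r_{\mathrm{in}}(z)/\cos(\alpha(c)/2)$ from the axis at every height $z$; this is possible because \needspace{ } rescales $\canon{T(u_t)}$ so that its radial width is small compared with the gap $r_{\mathrm{out}}(z)-r_{\mathrm{in}}(z)\ge(c-1)r_{\mathrm{in}}(z)$. Next, I would perform $\lceil 2\pi/\alpha(c)\rceil=O(1)$ rotation steps, each one simultaneously rotating every subtree by an angle at most $\alpha(c)$, until every $T(u_t)$ lies in the half‑plane $\{y=0,\ x>0\}$. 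By the rotation primitive and the choice of $\alpha(c)$, throughout each step the drawing of $T(u_t)$ stays inside $A_t$ (its smallest radius stays $\ge r_{\mathrm{in}}(z)$ and its largest radius never exceeds its initial value $\le r_{\mathrm{out}}(z)$), so no two subtrees collide and none meets the pole. Finally, $O(1)$ further linear morphs put each $T(u_t)$ into the position prescribed for $\Gamma_4$ inside the half‑plane $\{y=0,\ x>0\}$ -- undoing the outward shift and sliding $u_t$ radially onto \innerCone{u_t} at height $z_t$. The whole of Step~4 thus uses a number of steps bounded by a constant depending only on $c$.

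\emph{Main obstacle.} The delicate point, and what I expect to be the crux of a full proof, is crossing‑freeness of the rotation morphs once the edges joining light children to the pole are taken into account. The edge of $u_t$ runs from a point of the pole up into $A_t$, and the rotation sweeps it around the axis, so a priori it could meet an inner subtree $T(u_{t'})$ with $t'<t$; it cannot meet a subtree $T(u_{t'})$ with $t'>t$, since those live at heights disjoint from the edge's. I plan to rule out collisions with inner subtrees using the nesting of cones that \needspace{ } builds for exactly this reason -- \innerCone{u_t} encloses \outerCone{u_{t'}} for all $t'<t$ throughout the relevant range of heights, which is what the auxiliary cone $\tilde C$ guarantees -- together with the facts that the re‑centering leaves $u_t$ strictly outside \innerCone{u_t} and that in $\canon{T(u_t)}$ the subtree departs $u_t$ on the side opposite to its parent, so that the moving edge stays clear of the shells of all inner subtrees; one must likewise check that the placement morphs at the end are crossing‑free. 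Turning these geometric observations into a rigorous case analysis against the exact cones output by \needspace{ } is where the real work of the proof lies.
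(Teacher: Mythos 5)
Your mechanism is genuinely different from the paper's. Where you rotate each subtree about the pole --- so that intermediate drawings are rotated copies scaled horizontally by $\lambda(s)$ --- the paper keeps every linear morph a pure translation: all vertices of $T(u_t)$ move by one common vector, so the drawing of $T(u_t)$ remains an exact translate of $\canon{T(u_t)}$ at every instant, and the root $u_t$ traces a polygon inscribed in the planar annulus \annulus{t} cut from \innerCone{u_t} and \outerCone{u_t} by the plane $\mathcal{P}_t$. The radius ratio at least $c$ then yields an inscribed regular $O(1)$-gon, which is the same $\cos(\cdot)\ge 1/c$ trigonometry as your choice of $\alpha(c)$; both mechanisms give the $O(1)$ step count, and your scaling computation $\lambda(s)=\sqrt{1-2s(1-s)(1-\cos\alpha)}$ is correct.

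The gap is that you explicitly stop short of proving crossing-freeness, which is the substance of the lemma; you name it ``where the real work of the proof lies'' and leave it as a plan. The paper closes it with exactly the two ingredients you list, assembled as follows: (1) by Lemma~\ref{lemma:canon-tree}, $\canon{T(u_t)}$ lies on or above the line of slope $1$ through $u_t$, and since the slope of \outerCone{u_t} is at most $1$, the translate of $\canon{T(u_t)}$ stays enclosed in \outerCone{u_t} whenever $u_t$ is in \annulus{t}; (2) by conditions (i)--(ii) of \needspace{\Gamma}, the cone \innerCone{u_{t+1}} encloses \outerCone{u_t} above $\mathcal{P}_t$, so the edge from $u_{t+1}$ to the pole never meets \outerCone{u_t}. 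Note that your rotation primitive makes step (1) strictly harder, not easier: your intermediate drawings are not translates of $\canon{T(u_t)}$ but horizontally scaled copies, so the slope-$1$ containment must be re-derived for the scaled images (it does survive, because every point of $\canon{T(u_t)}$ has $x$-coordinate at least that of $u_t$, but you never verify this). You also assume that in $\Gamma_3$ each $T(u_t)$ lies, together with its edge to the pole, in a vertical half-plane \emph{bounded by the pole}; the construction only guarantees a translate of $\canon{T(u_t)}$, i.e., a vertical plane parallel to $y=0$ that in general does not contain the axis, so your primitive needs either an extra normalization of the recursive call's output or an additional preliminary morph that you do not account for.
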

\begin{proof}
Let \annulus{t} be the annulus formed by the section of \innerCone{u_t} and \outerCone{u_t}  cut by the plane $\mathcal{P}_t$. See Figure~\ref{fig:annuli}.
The morph performed in Step~4 consists of a sequence of linear morphs; in each of these morphs all the vertices of $T(u_t)$ are translated by the same vector. This is done so that $u_t$ stays in \annulus{t} during the whole Step 4. Thus, the trajectory of $u_t$ during Step~4 defines a polygon inscribed in \annulus{t}. Since the ratio between the outer and the inner radius of \annulus{t} is at least the global constant $c$, we can inscribe a regular $O(1)$-gon in \annulus{t}, and the trajectory of $u_t$ can be defined so that it follows this $O(1)$-gon plus at most one extra line segment.

We now prove that since each $u_t \in L$ stays in \annulus{t}, all the steps of the above morph are crossing-free. Recall that at any moment during the morph, the drawing of $T(u_t)$ is a translation of the canonical 3D drawing $\canon{T(u_t)}$. 
By Lemma~\ref{lemma:canon-tree}, the space below the line of slope 1 passing through $u_t$ in plane $y=0$ does not contain any point of $\canon{T(u_t)}$. Since the slope of \outerCone{u_t} is at most 1, the drawing of $T(u_t)$ is enclosed in \outerCone{u_t} as long as $u_t$ is in \annulus{t}. By conditions ($i$) and ($ii$) of \needspace{\Gamma}, the cone \innerCone{u_{t+1}} encloses  \outerCone{u_t} in the closed half-space above $\mathcal{P}_t$. Hence the edge connecting $u_{t+1}$ to the pole never touches \outerCone{u_t} above $\mathcal{P}_t$.
	 \qed

\end{proof}
	\begin{theorem}
	\label{thm:morph-trees}
		For any two plane straight-line drawings $\Gamma, \Gamma'$ of an $n$-vertex tree $T$, there exists a crossing-free 3D morph from $\Gamma$ to $\Gamma'$ with $O(\log n)$ steps.  
	\end{theorem}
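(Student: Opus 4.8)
The plan is to show that the procedure \canonize{\Gamma} of Section~\ref{sec:morph-tree}, applied to a planar straight-line drawing $\Gamma$ of $T$ normalized as described there, outputs a crossing-free 3D morph from $\Gamma$ to the canonical drawing $\canon{T}$ using $O(\log n)$ steps. The theorem then follows by concatenating this morph with \canonize{\Gamma'} played backwards: a crossing-free linear morph reversed is again a crossing-free linear morph, and both \canonize{\Gamma} and \canonize{\Gamma'} end at the same drawing $\canon{T}$, which by Lemma~\ref{lemma:canon-tree} depends only on $T$ and not on the embedding. In particular this also settles the case in which $\Gamma$ and $\Gamma'$ are not topologically equivalent.

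The heart of the argument is a single induction on the height of the path tree of $T$ that proves two statements at once: (a) \needspace{\Gamma} returns a cylinder that contains every intermediate drawing produced by \canonize{\Gamma}, and the cones \innerCone{u_t}, \outerCone{u_t} it produces are well-separated with the global constant $c$ and nested as required by conditions (i)--(ii) of \needspace{\Gamma}; and (b) each of the six steps of \canonize{\Gamma} is a crossing-free linear morph (Step~4 being a crossing-free morph of $O(1)$ steps, by Lemma~\ref{lemma:rotations}). Statement (a) is what makes the parallel recursion of Step~3 legitimate: by induction each recursive morph \canonize{\Gamma_2(T(u_t))} stays inside the translated cylinder $\mathcal{S}_t$, and the $\mathcal{S}_t$ are pairwise disjoint and disjoint from the pole by construction of \needspace{\Gamma}, so the $l$ recursive morphs, run in parallel, cannot interfere and cost only $\max_t S(\Gamma_2(T(u_t)))$ steps, where $S(\cdot)$ denotes step count. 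Together with the $O(1)$ steps of the other five steps this gives $S(\Gamma)\le c_0 + \max_t S(\Gamma_2(T(u_t)))$ for a constant $c_0$ depending on $c$; since each recursive call descends exactly one level of the path tree (the head of $H(T(u_t))$ is the light child $u_t$, whose parent lies on $H(T)$), and the path tree has height $O(\log n)$~\cite{sleator1981data}, we conclude $S(\Gamma)=O(\log n)$.

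For statement (b) I would go through the steps in order. Step~1 is crossing-free by the $z$-monotonicity argument of the proof of Theorem~\ref{thm:path}: every subtree hanging off $H$ translates rigidly along the trajectory of its attachment vertex, hence stays congruent to a crossing-free sub-drawing of $\Gamma$, while the $z$-coordinates of $v_0,\dots,v_{k-1}$ become and remain strictly increasing, separating the pole edges from everything else by horizontal planes. Step~2 scales each $T(u_t)$ uniformly about the vertical through its parent and lifts it to its own horizontal plane $\mathcal{P}_t$, where it lands strictly between \innerCone{u_t} and \outerCone{u_t}; the planes $\mathcal{P}_t$ are stacked using the heights returned by \needspace{\Gamma}, so the lifted subtrees are pairwise vertically separated and separated from the pole throughout. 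Step~3 is crossing-free by (a) as above. Step~4 is Lemma~\ref{lemma:rotations}. Step~5 only decreases the $z$-coordinates of whole subtrees, which by then are planar, lie in the half-plane $x>0$ of the plane $y=0$, and are confined to their outer cones; the nesting of consecutive cones guarantees they stay disjoint as they slide down. Step~6 moves vertices only horizontally, monotonically decreasing $x$ onto the integer grid of Lemma~\ref{lemma:canon-tree}, which preserves within each subtree the ordering that keeps it crossing-free. (The recursive calls operate on the subtrees in their own local coordinate frames, which is harmless up to a change of reference system, exactly as in the normalization assumption.)

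I expect the main obstacle to be statement (a) together with the crossing-freeness of Steps~2 and~5: one must check that \needspace{\Gamma} chooses the apexes, slopes and heights so that simultaneously (i) the returned cylinder encloses the translated cylinders $\mathcal{S}_t$ and all of $\Gamma_1$, (ii) consecutive cones leave enough room for the lift of Step~2 and for the inscribed $O(1)$-gon trajectories of Step~4, and yet (iii) once the subtrees are flattened into $y=0$ they can be pushed down past one another in Step~5 without collision. This is precisely the geometric bookkeeping that the conditions (i)--(ii) on the auxiliary cone $\tilde{C}$ and the global constant $c$ in $\slope{C'}=\min(\slope{C}/\spread{u_t,\Gamma_1},\,\slope{C}/c)$ are designed to handle, and making all of these inequalities hold at once inside the induction is the delicate part of the proof.
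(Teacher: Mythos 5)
Your proposal follows essentially the same route as the paper's own (sketched) proof: concatenate \canonize{\Gamma} with the reverse of \canonize{\Gamma'}, verify crossing-freeness step by step (invoking Lemma~\ref{lemma:rotations} for Step~4), and obtain the $O(\log n)$ bound from the parallel recursion of Step~3 combined with the $O(\log n)$ height of the path tree, which is exactly the paper's heavy-path argument. The parts you flag as delicate (the geometric bookkeeping in \needspace{\Gamma} and the crossing-freeness of Step~2) are precisely the parts the paper also defers to its full version, so there is no divergence to report.
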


\begin{proof}[sketch]
A 3D morph from $\Gamma$ to $\Gamma'$ can be constructed as the concatenation of \canonize{\Gamma} with the reverse of \canonize{\Gamma'}. Hence, it suffices to prove that \canonize{\Gamma} is a crossing-free 3D morph with $O(\log n)$ steps.

It is easy to see that Steps 1, 5, and 6 of \canonize{\Gamma} are crossing-free linear morphs. The proof that Step 2 is a crossing-free linear morph is more involved. In particular, for any two light children $u_s$ and $u_t$ with $s<t$ of the same vertex $v_i$ of $H$, the occurrence of a crossing between the edge $v_iu_s$ and an edge of $T(u_t)$ during Step 2 can be ruled out by arguing that the same two edges would also cross in $\Gamma_1$; this argument exploits the uniformity of the speed in a linear morph and that the horizontal component of the morph of Step~2 is a uniform scaling. Lemma~\ref{lemma:rotations} ensures that Step~4 is a crossing-free 3D morph with $O(1)$ steps. Thus, Steps 1, 2, 4, 5, and 6 require a total of $O(1)$ steps. Since the number of morphing steps of Step~3 of \canonize{\Gamma} is equal to the maximum number of steps of any recursively computed morph and since, by definition of heavy path, each tree $T(u_t)$ for which a recursive call \canonize{\Gamma_2(T(u_t))} is made has at most $n/2$ vertices, it follows that \canonize{\Gamma} requires $O(\log n)$ steps.
\qed
\end{proof}

	\section{Conclusions and Open Problems} \label{sec:conclusions}

	In this paper we studied crossing-free 3D morphs of tree drawings. We proved that, for any two planar straight-line drawings of the same $n$-vertex tree, there is a crossing-free 3D morph between them which consists of $O(\log n)$ steps.

	This result gives rise to two natural questions. First, is it possible to bring our logarithmic upper bound down to constant? In this paper we gave a positive answer to this question for paths. 
In fact our algorithm to morph planar straight-line tree drawings has a number of steps which is linear in the pathwidth of the tree (see Remark~\ref{rem:therese}), thus for example it is constant for caterpillars. Second, does a crossing-free 3D morph exist with $o(n)$ steps for any two planar straight-line drawings of the same $n$-vertex planar graph? The question is interesting to us even for subclasses of planar graphs, like outerplanar graphs and planar $3$-trees.  
		
	We also proved that any two crossing-free straight-line 3D drawings of an $n$-vertex tree can be morphed into each other in $O(n)$ steps; such a bound is asymptotically optimal in the worst case. An easy extension of our results to graphs containing cycles seems unlikely. Indeed, the existence of a deterministic algorithm to construct a crossing-free 3D morph with a polynomial number of steps between two crossing-free straight-line 3D drawings of a cycle would imply that the unknot recognition problem is polynomial-time solvable. The {\em unknot recognition} problem asks whether a given knot is equivalent to a circle in the plane under an ambient isotopy. This problem has been the subject of investigation for decades; it is known to be in NP~\cite{hlp-ccklp-99} and in co-NP~\cite{l-ecktn-16}, however determining whether it is in P has been an elusive goal so far.

\subsubsection*{Acknowledgments} 
We thank Therese Biedl for pointing out Remark~\ref{rem:therese}.  
\\ 
The research for this paper started during the Intensive Research Program in Discrete, Combinatorial and Computational Geometry, which took place in Barcelona, April-June 2018. We thank Vera Sacrist\'an and Rodrigo Silveira for a wonderful organization and all the participants for interesting discussions.
\\

	\bibliographystyle{plain}
	\bibliography{pole}
\end{document}